\newtheorem{proposition}{\underline{Proposition}}
\newtheorem{lemma}{\underline{Lemma}}
\begin{document}

\title{Optimal 1D Trajectory Design for  UAV-Enabled Multiuser Wireless Power Transfer    \thanks{*J. Xu is the corresponding author. }}
 
\vspace{-.2cm}
\author[$\dag$]{ 
Yulin Hu$^1$, Xiaopeng Yuan$^1$, Jie Xu$^{2*}$,  and Anke Schmeink$^1$\\
 $^1$ISEK Research Group, RWTH Aachen University, 52062 Aachen, Germany.  \\
 $^2$School of Information Engineering, Guangdong University of Technology,
 Guangzhou 510006, China. \\
  E-mail:   \{hu,~xiaopeng.yuan\}@ti.rwth-aachen.de,~jiexu@gdut.edu.cn,~schmeink@ti.rwth-aachen.de    \vspace{-.2cm}
}

\maketitle

\begin{abstract}
In this paper, we study an unmanned aerial vehicle (UAV)-enabled wireless power transfer (WPT) network, where a UAV flies at a constant altitude in the sky  to provide wireless energy supply for a set of  ground nodes with a linear topology.   Our objective is to maximize  the minimum received energy among all ground nodes by optimizing the UAV's one-dimensional (1D) trajectory, subject to the maximum UAV flying speed constraint. Different from previous works that only provided heuristic and locally optimal solutions, this paper is  the first  work to present  the globally  optimal 1D UAV trajectory solution  to the considered   min-energy maximization problem.   Towards this end, we first show that     for any given speed-constrained   UAV trajectory, we can always construct   a maximum-speed trajectory and a speed-free trajectory, such that their combination can achieve the same received energy at all these ground nodes. 
Next, we transform the original UAV-speed-constrained trajectory optimization  problem into an equivalent UAV-speed-free problem, which is then optimally solved via    Lagrange dual method.  The obtained  optimal 1D UAV trajectory solution follows the so-called  successive hover-and-fly (SHF) structure, i.e.,   the UAV  successively hovers at a finite number of hovering points each for an optimized hovering duration, and  flies among these  hovering points at the maximum speed.   Numerical results show that our proposed   optimal solution significantly  outperforms the benchmark  schemes in prior works under different scenarios. 

\end{abstract}
 \vspace{.05cm}
\begin{IEEEkeywords}
Unmanned aerial vehicle (UAV), wireless power transfer (WPT), energy fairness, trajectory optimization, successive hover-and-fly (SHF). 
\end{IEEEkeywords}

    \vspace{-.1cm}
\section{Introduction}     \vspace{-.02cm}
 In recent years, unmanned aerial vehicle (UAV)-enabled wireless applications have attracted increasing attentions, as UAVs can be used for   the quick deployment of  on-demand wireless systems. Thanks to the presence of line-of-sight (LoS) aerial-to-ground  wireless links between UAVs and ground devices,  UAV-enabled wireless networks are likely to have a better system performance than conventional terrestrial wireless networks~\cite{magazine_1}.
In general, there are two types of UAV-enabled wireless applications, i.e., wireless communication~\cite{Survey_1,Throughput_1,EnergyEfficient_1,Offloading_1,location_1,location_2,location3D_1,location3D_2} and wireless power transfer (WPT)~\cite{Xu_Globecom,Xu_TCOM,Xu_Iot,UAV_WPT_EC}.

In UAV-enabled wireless communication, the fully mobile  UAVs provide new degrees of freedom in improving the wireless performance via optimizing UAVs' quasi-stationary  deployment locations or time-varying locations over time (a.k.a. trajectories) \cite{Survey_1}. For instance, the prior works \cite{location_1,location_2,location3D_1,location3D_2,Offloading_1} considered   UAV-enabled cellular base stations (BSs), where the UAV's deployment   locations are  optimized   to provide  the maximum coverage for ground users \cite{location_1,location_2,location3D_1,location3D_2}, and to enhance  the performance of cell-edge users via data offloading \cite{Offloading_1}. In addition, in UAV-enable mobile relaying systems, the UAV trajectory   is jointly optimized    with the wireless resource allocation, so as    to maximize the throughput \cite{Throughput_1,Throughput_2} or  the energy efficiency~\cite{EnergyEfficient_1}.


On the other hand, motivated by the great success of integrating WPT into wireless networks \cite{EH_1,EH_2,EH_3}, UAV-enabled WPT has recently emerged as a promising solution to prolong the lifetime of low-power  sensors and IoT devices, by using UAVs as mobile energy transmitters (ETs) to power these devices \cite{Xu_Globecom,Xu_TCOM,Xu_Iot,UAV_WPT_EC}. In particular, by considering the UAV flying at a
fixed altitude, the works~\cite{Xu_Globecom,Xu_TCOM}   optimized the one-dimensional (1D) or two-dimensional (2D) UAV trajectory   to maximize the energy transfer performance for a UAV-enabled WPT network, subject to a maximum UAV speed constraints. In a two-user scenario in a linear topology, the authors in \cite{Xu_Globecom}   optimized the 1D UAV trajectory to  characterize the Pareto boundary of the achievable  energy region by the two users. This result is then extended to the general multiuser scenario in a 2D topology  in \cite{Xu_TCOM}, where the 2D UAV trajectory is optimized to maximize  the minimum received energy among these users. 
It is worth noting that the above approaches in  \cite{Xu_Globecom,Xu_TCOM} can only obtain the globally optimal solution in the extreme case with the UAV maximum speed constraints being ignored. For the general case with the UAV maximum speed constraints involved, these approaches can only obtain   heuristic and locally optimal solutions. 
To our best knowledge, for the UAV-enabled WPT networks, how to obtain  the optimal UAV trajectory solution and reveal its     structure  still remains unknown, even for the basic  case with two users in a linear topology. This thus motivates our investigation in this paper to provide an optimal 1D UAV trajectory design and  to characterize the structure of the optimal UAV trajectory.

In this paper, we consider a   UAV-enabled WPT network with a linear topology, where multiple ground nodes are deployed in a straight line, e.g., along with a river, road or tunnel.
 To charge these ground nodes in an efficient and fair manner, we  aim at maximizing the minimal received energy among all ground nodes via designing the UAV's  1D trajectory (or equivalently the velocity) for WPT, while the UAV mobility is subject to  maximum speed constraints. 
 The results of this work are summarized as follows:  Different from previous works that only provided heuristic and locally optimal solutions, for the first time, we present  the globally  optimal 1D UAV trajectory solution  to the considered WPT problem, by equivalently decomposing any speed-constrained 1D UAV trajectory into a maximum-speed trajectory and a speed-free trajectory, together with the Lagrange dual method. It is proved that the optimal 1D UAV trajectory solution follows an  interesting {\it successive hover-and-fly} (SHF)   structure, i.e.,   the UAV  successively hovers at a finite number of hovering points each for an optimized hovering duration, and  flies among these hovering points at the maximum speed. 

   \vspace{-.3cm}
 \section{System Model and Problem Formulation}
 \label{sec:model_and_Problem}
In this paper, we  consider a   UAV-enabled multiuser WPT system with
a linear topology as shown in Fig.~\ref{system_topology},  where a UAV flies at a fixed altitude $H>0$ to wirelessly charge a set $\mathcal K = \{1,\cdots,K\}$ of $K$   ground nodes (such as IoT devices and sensors) that
are located  in a straight line. 
\begin{figure}[!h]
\centering
\includegraphics[width=0.42\textwidth, trim=10 17 15 22]{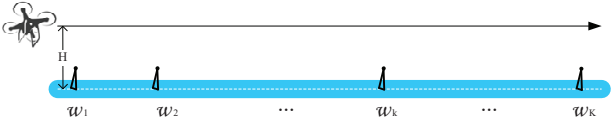}
\caption{Illustration of the UAV-enabled WPT network with a linear topology. }
\label{system_topology}
   \vspace{-.16cm}
\end{figure}
We denote the horizontal location of node $k \in \mathcal K$ as $w_k$. We assume that $w_1 \le \cdots \le w_K$ without loss of generality.
 To efficiently charge  all nodes, we focus on a finite UAV charging period     $\mathcal{T} \buildrel \Delta \over = [0,T]$ with duration $T > 0$.
The UAV's time-varying  horizontal location     is denoted by $x(t)$ at time instant  $t \in \mathcal T$.
In addition, the UAV is subject to a maximal flying speed   $V$. Hence, we have  $| \dot{x} (t)| \leq V, \forall t \in \mathcal{T}$, where $\dot{x}(t)$ denotes the first-order derivative of $x(t)$.

In practice, the  wireless channels between the UAV and   ground nodes  are  LoS-dominant, and therefore, we adopt the free-space path loss model as normally used in the UAV-enabled wireless communication and WPT literature~\cite{Throughput_1}.
 At  time $t$, the channel power gain from the UAV to ground node $k\in\mathcal K$ is denoted as $h_k(x(t)) = \frac {  \beta_0}{\left(x(t)-w_k\right)^2+H^2}$, where
 the distance between  the UAV and ground node $k$ is   $\sqrt{(x(t)-w_k)^2+H^{2}}$ and $\beta_0$ is the channel power gain at a reference distance of unit meter.
Hence,  the    received radio frequency (RF) power  by ground node $k$ at  time $t\in\mathcal T$~is 
\begin{equation}
Q_k\left(x(t)\right)= \frac {  \beta_0 P}{\left(x(t)-w_k\right)^2+H^2},
\end{equation}
where $P$ denotes the constant transmit power of the UAV. Notice that in practice, the received radio frequency (RF) signal should be converted into a direct current (DC) signal to charge the rechargeable battery at each ground node, and the RF-to-DC conversion is in general a non-linear process~\cite{Bruno_2018}. In order to focus our study on the wireless transmission, we use the received RF power as the performance metric by ignoring the non-linear RF-to-DC conversion process, as in ~\cite{Xu_Globecom, Xu_TCOM}.

Due to the broadcast nature of the wireless transmission,  all  ground nodes can simultaneously receive wireless  power during the whole  charging period $\mathcal{T}$.
As a result, the total energy received by   ground node $k \in\mathcal K$  is given by 
\begin{equation}
 \label{eq:device_k_energy}
E_k(\{x(t)\})
= \int_{0}^{T} Q_k\left(x\left(t\right)\right)dt.
\end{equation}



Our objective is to design the UAV trajectory  to maximize the minimal received energy among all the $K$ nodes  during the   charging period $T$.  The problem   is formulated as  
\begin{eqnarray}
 \label{eq:OP}
({\rm OP}): \max_{\{x(t)\}} &&  \min\limits_{k\in\mathcal K} \int_{0}^T Q_k(x(t))dt  \\
\mathrm{s.t.} 
 && | \dot{x} (t)| \leq V, \forall t \in \mathcal{T}. \nonumber
\end{eqnarray}
By introducing an auxiliary variable   $E$, 
the original  problem   (OP) is equivalently reformulated as  
\begin{eqnarray}
 \label{eq:P1}
({\rm P}1): \max_{\{x(t)\},E} && E  \nonumber \\
\mathrm{s.t.} && \int_{0}^{T} Q_k(x(t))dt\geq   E  ,\forall k \in \mathcal K \,  \label{eq:P1_constraint}  \\
 && | \dot{x} (t)| \leq V, \forall t \in \mathcal{T} \nonumber.
\end{eqnarray}

Notice that both the original problem (OP) and the     reformulated problem (P1) are non-convex, due to the fact that the objective function in (OP) is non-concave, and constraint ${\int_{0}^{T} Q_k(x(t))dt\geq E}$ in (P1) is non-convex, respectively. Furthermore, both problems consist of an infinite number of variables $\{x(t)\}$ over continuous time. Therefore, how to find the optimal solution to the min-energy maximization problem is generally a very difficult task.

Notice that in the prior work~\cite{Xu_TCOM}, the authors solved the 2D UAV trajectory optimization problem for min-energy maximization by the following  three-step approach, which can also be used to solve (OP) and (P1) directly. First, by considering a relaxed problem of (P1) with    the UAV's maximum speed
constraint ignored, the optimal {\it multi-location-hovering} trajectory solution of the relaxed problem is obtained. Next, by taking into account the maximum  UAV speed constraint,   a heuristic   SHF  trajectory design is proposed, in which the UAV flies at the maximum speed to successively visit the obtained optimal hovering locations to the relaxed problem above, and hovers above them accordingly. 
In the heuristic SHF trajectory, the traveling salesmen problem (TSP) is used to obtain the visiting order among these locations with minimized flying distance/time\footnote{Note that TSP is only needed for the 2D trajectory, but is not required for the 1D trajectory design of our interest. Nevertheless, the heuristic SHF is suboptimal as it does not take into account the WPT during UAV flying. }. Finally, the    {\it successive convex approximation (SCA)-based} trajectory design is proposed, which quantizes the path or time to subsequently refine the trajectory towards a locally optimal solution. 
It is worth noting that both the heuristic SHF and the SCP based    approaches can only obtain the globally optimal solution the relaxed problem, which mathematically corresponds to the ideal case with the flying duration or the UAV flying speed being infinite. This case may not happen in practice. For the general case, the heuristic SHF  trajectory is   suboptimal, while the performance of the SCP-based trajectory can only ensure the local optimality when the quantization becomes extremely accurate.  How to characterize the optimal 1D UAV trajectory    solution to the min-energy maximization problem in the general case with speed constraints   is still unknown.

    \vspace{-.24cm}
\section{Optimal SHF Trajectory Solution}    
 \label{sec:Optimality}
In this section, we present the optimal trajectory solution  to  problem (OP) or  (P1), and show that it has an interesting SHF     structure, in which the UAV hovers among a number of locations and then flies among them at the maximum speed.  


First, notice that there always exists a uni-directional trajectory that is optimal for problem (P1), i.e., $x(t_1) \le x(t_2), \forall t_1, t_2 \in \mathcal T, t_1 < t_2.$ This is due to the fact that for any given trajectory,  we can always find an alternative uni-directional UAV trajectory to achieve the same WPT performance but without flying forward and backward~\cite{Xu_JSAC_2018}. Therefore, in this paper we focus on the uni-directional trajectory without loss of optimality.

Next, we consider problem (P1) under given pair of initial and final locations $(x_{\rm I}, x_{\rm F})$.  This sub-problem is expressed  as 
\begin{eqnarray}
 \label{eq:P1.1}
({\rm P}1.1): \max_{\{x(t)\},E} && E  \nonumber \\
\mathrm{s.t.} && \int_{0}^{T} Q_k(x(t))dt\geq   E  ,\forall k \in \mathcal K \,   \label{eq:P11_constraint}  \\
 && | \dot{x} (t)| \leq V, \forall t \in \mathcal{T} \nonumber  \\
  && x_{\rm I} \leq  {x}(t) \leq x_{\rm F}, \forall t\in  {\mathcal T} .  \nonumber    \vspace{-.2cm}
\end{eqnarray}
In the following, we first show that any speed-constrained trajectory to problem (P1.1) 
is mathematically equivalent to the combination of a maximum-speed trajectory and a speed-free trajectory, and then provide the optimal solution to problem (P1.1)  under any  given $x_{\rm I}$ and $x_{\rm F}$
 via the Lagrange dual method. 
 After that,  we obtain the optimal solution to problem  (P1) by applying a 2D exhaustive search over $x_{\rm I}$, $x_{\rm F} \in [w_1,w_K], x_I \le x_{\rm F}$. 
 

\subsection{Problem Reformulation}
We start with the following lemma to show that we can construct two trajectories for any unidirectional trajectory $\{x(t)\}$ satisfying the maximum speed $V$.   



\begin{lemma}
 \label{le:Lemma1}
For any duration-$T$ unidirectional trajectory $\{x(t)\}$ satisfying the maximum speed $V$ with given initial position $x(0) = x_{\rm I}$ and final position $x(T) = x_{\rm F}$,  we can always find two UAV trajectories $\{\bar x(t)\}$   and $\{\hat x(t)\}$   to jointly achieve the same WPT performance. In particular, $\{\bar x(t)\}$ is the max-speed flying with $\bar x(t) = x_{\rm I} + Vt , \forall t\in(0,\bar T]$, where   $\bar T=(x_{\rm F}-x_{\rm I})/V$.
In addition, $\{\hat x(t)\}$ has a time duration $\hat T=T-(x_{\rm F}-x_{\rm I})/V$ without any UAV speed constraints (speed-free). In other words, the following equality holds for any $ k \in \mathcal K$.    \vspace{-.1cm}
\begin{equation}
 \int_{0}^{T} \! Q_k\left(x\left(t\right)\right)dt =  \! \int_{0}^{\bar T} \! Q_k\left(\bar x\left(t\right)\right)dt +  \int_{0}^{\hat T} \! Q_k\left(\hat x\left(t\right)\right)dt.
 \end{equation}
\end{lemma}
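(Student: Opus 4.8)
The plan is to exploit the one structural fact that matters here: $Q_k(x(t))$ depends on the UAV state only through the horizontal position $x(t)$, never through the time instant or the instantaneous velocity. Hence, for a unidirectional (monotone non-decreasing) trajectory, I would rewrite the energy $\int_0^T Q_k(x(t))\,dt$ as an integral over \emph{position} against the trajectory's ``dwell-time profile,'' whereupon the speed constraint $\dot x(t)\le V$ becomes a pointwise lower bound on how much time is accumulated per unit length. Concretely, introduce $G(y)\triangleq\big|\{t\in[0,T]:x(t)\le y\}\big|$ (Lebesgue measure), so that the change of variables gives $\int_0^T Q_k(x(t))\,dt=\int_{x_{\rm I}}^{x_{\rm F}}Q_k(y)\,dG(y)$ for every $k$, the total mass of $dG$ on $[x_{\rm I},x_{\rm F}]$ being $T$.

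The first real step is to show $dG\ge\frac1V\,dy$ on $[x_{\rm I},x_{\rm F}]$ as measures. For any $x_{\rm I}\le a<b\le x_{\rm F}$, since $x(\cdot)$ is continuous ($V$-Lipschitz) and runs from $x_{\rm I}$ to $x_{\rm F}$, the set $\{t:a<x(t)\le b\}$ contains a subinterval on which $x$ increases from $a$ to $b$, which by $\dot x\le V$ takes time at least $(b-a)/V$; hence $G(b)-G(a)\ge(b-a)/V$, i.e.\ the claimed domination. Therefore $d\hat G\triangleq dG-\frac1V\,dy\big|_{[x_{\rm I},x_{\rm F}]}$ is a \emph{nonnegative} measure of total mass $T-(x_{\rm F}-x_{\rm I})/V=\hat T\ge 0$ (nonnegativity of $\hat T$ is automatic from $x_{\rm F}-x_{\rm I}=\int_0^T\dot x\,dt\le VT$), and the energy splits as $\int_0^T Q_k(x(t))\,dt=\frac1V\int_{x_{\rm I}}^{x_{\rm F}}Q_k(y)\,dy+\int_{x_{\rm I}}^{x_{\rm F}}Q_k(y)\,d\hat G(y)$ for all $k$.

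It then remains to identify each piece with the energy delivered by an admissible trajectory. The first is immediate: substituting $y=x_{\rm I}+Vs$ yields $\frac1V\int_{x_{\rm I}}^{x_{\rm F}}Q_k(y)\,dy=\int_0^{\bar T}Q_k(\bar x(s))\,ds$ with $\bar x(s)=x_{\rm I}+Vs$ and $\bar T=(x_{\rm F}-x_{\rm I})/V$, exactly the stated max-speed trajectory. For the second, I would take $\hat x$ on $[0,\hat T]$ to be the unidirectional trajectory realizing the occupation measure $d\hat G$: with $\tau(y)\triangleq\hat G(y)-\hat G(x_{\rm I}^-)$ (non-decreasing, from $0$ to $\hat T$) and $\hat x(u)\triangleq\inf\{y:\tau(y)>u\}$ its generalized inverse, the pushforward identity gives $\int_0^{\hat T}Q_k(\hat x(u))\,du=\int_{x_{\rm I}}^{x_{\rm F}}Q_k(y)\,d\tau(y)=\int_{x_{\rm I}}^{x_{\rm F}}Q_k(y)\,d\hat G(y)$; this $\hat x$ is admissible for the speed-free model precisely because no bound on $|\dot{\hat x}|$ is imposed — it hovers at the atoms of $d\hat G$ and traverses the remaining positions instantaneously. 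Adding the two identities gives the claim.

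I expect the main obstacle to be the measure-theoretic bookkeeping when $x(\cdot)$ is not strictly increasing: hovering segments of the original trajectory produce atoms of $dG$, so the change of variables must be stated via Lebesgue--Stieltjes / occupation measures rather than a naive substitution, and one must verify that the residual $d\hat G$ (atoms plus absolutely continuous part) is genuinely realizable by a speed-free path of duration exactly $\hat T$ — i.e.\ that the generalized inverse above is measurable and pushes Lebesgue measure on $[0,\hat T]$ forward to $d\hat G$. A cleaner alternative, if one wants to avoid Stieltjes integrals, is to first prove the identity for smooth strictly increasing $x(\cdot)$ by the elementary substitution $y=x(t)$, $dt=dy/\dot x\big(x^{-1}(y)\big)$ with $1/\dot x\big(x^{-1}(y)\big)\ge 1/V$, and then pass to a general unidirectional trajectory by approximation.
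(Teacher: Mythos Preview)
Your argument is correct and is in fact cleaner than the paper's own proof, but it proceeds along a genuinely different route. The paper partitions $[0,T]$ into many short portions on which the speed is (approximately) constant $v_i$, and on each portion performs an explicit algebraic split of the integral
\[
\int_0^{\tau_i}\frac{\beta_0 P\,dt}{(x_{i,\rm I}+v_it-w_k)^2+H^2}
=\frac{v_i}{V}(\cdots)+\frac{V-v_i}{V}(\cdots),
\]
followed by two separate substitutions $v_it=Vt_1$ and $v_it=\tfrac{Vv_i}{V-v_i}t_2$, which produce a max-speed leg and a leg at speed $Vv_i/(V-v_i)$ over the remaining time; summing over portions gives $\{\bar x\}$ and $\{\hat x\}$. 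Your approach instead passes globally to the occupation measure $dG$, observes that $|\dot x|\le V$ is equivalent to $dG\ge V^{-1}\,dy$, and peels off the $V^{-1}\,dy$ part as the max-speed trajectory; the residual nonnegative measure of mass $\hat T$ is then realized by a speed-free path via the generalized inverse. What this buys you is a one-shot, rigorous decomposition that handles hovering intervals (atoms of $dG$) and non-constant-speed segments uniformly, without the informal ``sufficiently small portions with constant speed'' approximation and without the segment-by-segment bookkeeping. What the paper's argument buys is a completely elementary calculus-level computation and a very explicit, constructive description of $\{\hat x\}$ on each segment (speed $Vv_i/(V-v_i)$), which makes the later SHF structure more tangible. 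Your caveat about the measure-theoretic bookkeeping is well placed; the generalized-inverse construction is the right way to realize $d\hat G$ as a trajectory, and the measurability/pushforward verification is routine.
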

\begin{proof}
The proof is provided in   Appendix~A.    \vspace{-.2cm}
\end{proof}

Note that for the maximum-speed flying trajectory $\{\bar x(t)\}$  from $x_{\rm I}$   to  $x_{\rm F}$, the trajectory   is fixed, i.e., the UAV flies from  $x_{\rm I}$   to  $x_{\rm F}$ at the maximal speed $V$.
In particular, for   trajectory $\{\bar x(t)\}$, the received energy by ground node $k\in\mathcal K$ is       
\begin{equation}
\begin{split}
 \label{bar_x_energy_k}
\! \!\bar E_k &=  \int_{0}^{\bar T} \!\! Q_k \! \left(\bar x\left(t\right)\right)dt \! =\! \int_{0}^{\bar T} \!\!Q_k\left(  x_{\rm I} + Vt \right)dt\\ 
&= \frac{\beta_0 P}{VH}\arctan(\frac{x_{\rm F}\!-\!w_k}{H})\!-\! \frac{\beta_0 P}{VH}\arctan(\frac{x_{\rm I}\!-\!w_k}{H}).
 \end{split}
 \end{equation}

%

Based on  Lemma~\ref{le:Lemma1},  problem (P1) under given $x_{\rm I}$   and $x_{\rm F}$  can be equivalently reformulated~as   
\begin{eqnarray}
 \label{eq:P2}
\!\!\!\!\!\!\!\!({\rm P}2): \max_{\{\hat x(t)\},E} && E \nonumber \\
\mathrm{s.t.} && \int_{0}^{\hat T} Q_k(\hat{x}(t))dt+ \bar{E}_k \geq   E  , \, \, \forall k \in \mathcal K  \,   \\
 && x_{\rm I} \leq \hat{x}(t) \leq x_{\rm F}, \forall t\in \hat{\mathcal T},  \nonumber 
\end{eqnarray} 
 where $\hat {\mathcal T} \triangleq    [0,\hat{T}]$
and $ \bar{E}_k$ is a constant defined in~\eqref{bar_x_energy_k}.


\subsection{Optimal Solution to Problem (P2)}    
 \label{sec:Optimality_P2}

 Problem (P2) is non-convex but satisfies the so-called time-sharing condition in \cite{Yu_2006}. Therefore,   strong duality holds between   problem (P2) and its Lagrange dual problem. Therefore,    problem (P2) can be solved  via the    Lagrange dual method  \cite{S_2004}.
%

Denote the  Lagrange multiplier for the $k$-th constraint in~(8)   by  $\lambda_k\geq 0, k\in \mathcal K$. 
The partial Lagrangian of problem (P2) is 
\begin{equation}
\begin{split}
\!\!&\mathcal{L}_2 \left(\{\hat x(t)\},E,\{\lambda_k\} \right) = E\!+\!\sum_{k\in \mathcal K}\!\lambda_k\left(\!\int_{0}^{\hat T} \!\!Q_k(\hat x(t))dt \!+\! \bar{E}_k\!-\!E\right) \\
\!\!&= (1-\sum_{k\in \mathcal K}\lambda_k)E+ \sum_{k\in \mathcal K}\lambda_k \bar{E}_k +\int_{0}^{\hat T}\sum_{k\in \mathcal K}\lambda_k Q_k(\hat x(t)) dt.
\end{split}
\end{equation}
 %
Immediately, we have the corresponding dual function as 
\begin{eqnarray}
  \label{eq:dual_P}
f_2\left(\{\lambda_k\}\}\right)&= &\max_{\{\hat x(t)\},E} \mathcal{L}_2\left(\{\hat x(t)\},E,\{\lambda_k\}\}\right)  \\
&\mathrm{s.t.}  & x_{\rm I} \le \hat{x}(t) \le x_{\rm F}, \forall t\in\hat{\mathcal T} \nonumber. 
\end{eqnarray}
Clearly, the condition $1-\sum_{k\in \mathcal K}{\lambda_k}=0$ must be satisfied to guarantee that the function $f_2(\{\lambda_k\})$ is upper-bounded from above, i.e., $f_2(\{\lambda_k\}) < \infty$. Otherwise, if $1-\sum_{k\in \mathcal K}{\lambda_k} < 0$ (or $1-\sum_{k\in \mathcal K}{\lambda_k} > 0$), we have $f_2\left(\{\lambda_k\}\}\right) \to \infty$ by setting  $E \to - \infty$ (or $E \to \infty$). 
Then, the dual problem of   problem (P2) is given by
\begin{eqnarray}
({\text{DP2}}): \max_{\{\lambda_k\}} && f_2(\{\lambda_k\})   \\
\mathrm{s.t.} && 1-\sum_{k\in \mathcal K} {\lambda_k} =0,\,\,\,\,  \lambda_k\geq 0 , \forall k \in \mathcal K \nonumber .
\end{eqnarray}

\begin{table*}[!h]
    \vspace{.036in}
\begin{equation}
\label{x_nex_expression}
\!x^*(t)\!=\!\!
\left\{\!\!\! \begin{array}{llr}
\hat x_i^*,
& \textrm{if} \, \sum\limits_{j=1}^{i}\hat\tau^*_j \! -\! \hat\tau^*_i  \!+\!\frac{x^*_i\!-\!x^*_0}{V} \leq t \leq \sum\limits_{j=1}^i \hat\tau^*_j\!+\!\frac{x^*_i\!-\!x^*_1}{V}, \, \,  {1\leq i \leq N+1},\\
x^*_0+V(t- \sum\limits_{j=1}^i \hat\tau^*_j),  & \textrm{if} \,  \sum\limits_{j=1}^i \hat\tau^*_j\!+\!\frac{x^*_i\!-\!x^*_0}{V} \leq t \leq \sum\limits_{j=1}^i \hat\tau^*_j\!+\!\frac{x^*_{i+1}\!-\!x^*_0}{V},  \, \, {1\leq i \leq N}, \\
\end{array} \right.
\end{equation}
\hrule 
\end{table*}

 Notice that as strong duality holds between problem (P2) and its dual problem (DP2), we solve problem (P2) by equivalently solving the dual problem (DP2), in which we first obtain the dual function $f_2\left(\{\lambda_k\}\}\right)$ under any given $\{\lambda_k\}$ by solving problem \eqref{eq:dual_P}, and then updating $\{\lambda_k\}$ via subgradient-based methods such as the ellipsoid method~\cite{Ellipsoid_Method} to find the opitmal  $\{\lambda_k^\star\}$ maximizing $f_2(\{\lambda_k\})$.  

First of all, we deteminate the dual function $f(\{\lambda_k\})$. 
Consider problem \eqref{eq:dual_P} under any given~$\{\lambda_k\}$ satisfying the constraints in (DP2).
As $1-\sum_{k\in \mathcal K}{\lambda_k}=0$, problem (10) can be decomposed into the following subproblems by dropping the constant term $\sum \lambda_k \bar E_k$,  each for one time instant $t \in \mathcal T$.    
 \begin{eqnarray}
  \label{eq:dual_or_P2_}
\max_{\{\hat x(t)\}} && \sum_{k\in \mathcal K}\lambda_kQ_k(\hat x(t))  \\
\mathrm{s.t.} && x_{\rm I}\leq \hat x(t) \leq x_{\rm F}. \nonumber
\end{eqnarray}
As the problem in \eqref{eq:dual_or_P2_} has the same form 
at different  time instant $t$ in the above problem, we can simply drop the variable $t$ and   re-express the   problem  as 
\begin{eqnarray}
 \label{eq:dual_wot1_}
\max_{x} &&  F(\hat x)\triangleq  \sum_{k\in \mathcal K}\lambda_kQ_k(\hat x). \\
\mathrm{s.t.} && x_{\rm I} \leq \hat x \leq x_{\rm F}. \nonumber
\end{eqnarray}
We obtain the  extreme points of $F(\hat x)$ by letting  its first order derivative be zero, i.e.,  
\begin{equation}
F'(\hat x)=\sum_{k\in \mathcal K}\lambda_k\frac {-2(\hat x-w_k)\beta_0 P}{(\left(\hat x-w_k\right)^2+H^2)^2}=0,
\end{equation}
 which equals to solve    \vspace{-.25cm}
\begin{equation}
 \label{eq:dual_wot11}
\sum\limits_{k = 1}^K \left\{ - 2(\hat x - {w_k}){\beta _0}P{\lambda _k}  \cdot \prod\limits_{i \ne k}^{i \in \mathcal K} {\left({{\left( {\hat x - {w_i}} \right)}^2} + {H^2}\right)^2} \right\} =0.
\end{equation}
In other words,  the  extreme points of $F(\hat x)$ can be obtained by solving~\eqref{eq:dual_wot11}. By comparing the objective values in \eqref{eq:dual_wot1_} at the extreme points versus those     at  the boundary points $x_{\rm I} $ and $ x_{\rm F}$,     the optimal hovering points  $\hat x_1^*$, $\hat x_2^*, ..., \hat x_N^*$ are obtained, where       $N$ denotes the number of optimal solutions which achieves the same objective value. Accordingly, the dual function $f(\{\lambda_k\})$  is obtained.

With $f(\{\lambda_k\})$ obtained, the dual problem (DP2)  can be solved via the ellipsoid method,   and therefore,   the solution $\{\lambda_k^*\}$ is obtained. Based on $\{\lambda_k^*\}$, we  can  reconstruct the primal optimal solution to (P2) by solving    the time-sharing problem for allocating   the total duration  $T$ over  the~$N$ hovering points,  for which the optimization problem is formulated as the following linear program (LP):
\begin{eqnarray}
 \label{eq:dual_wot1_}
\max_{\{\hat \tau_i \ge 0\}, \hat E} && \hat  E \nonumber \\
\mathrm{s.t.} && \sum\nolimits^N_{i=1} \hat \tau_i Q_k(\hat x_i) \ge \hat E., \forall    {k\in \mathcal K}  \\
 &&\sum\nolimits^N_{i=1} \hat \tau_i  =T. \nonumber
\end{eqnarray}
By solving this LP problem via standard interior point method, we  obtain  the  optimal hovering durations $\hat \tau _1^*$, $\hat\tau _2^*, ..., \hat\tau _N^*$ corresponding to the $N$ hovering points.
 Therefore, (P2) is optimally solved under the given $x_{\rm I}$ and $x_{\rm F}$.
 In summary, the     optimal solution to (P2) is  described by the optimal hovering points and hovering durations
 \begin{equation}
  \label{eq:opt_expression_P2}
 \hat x^*(t) = \hat x_i^*, \,\,  {\rm if}\,\,\,  t \in \Big [\sum\nolimits_{j=1}^i  \hat\tau^*_j \!-\! \hat\tau^*_i,\,\, \sum\nolimits_{j=1}^i \hat\tau^*_j\Big],
\end{equation}
where ${ i = 1,\cdots,N}$.      \vspace{-.15cm}
%
\begin{lemma}
\label{th:infinity-speed}
There exists     one optimal multi-location-hovering solution to problem (P2) with the number of hovering points being no more than $K$, i.e.,
$\hat N \leq 2K+1$.       \vspace{-.05cm}
\end{lemma}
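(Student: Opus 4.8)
The plan is to reduce the statement to two independent bounds: (i) the set of positions at which an optimal solution of (P2) can possibly hover --- namely the maximizers over $[x_{\rm I},x_{\rm F}]$ of the rational function $F(\hat x)=\sum_{k\in\mathcal K}\lambda_k^\star Q_k(\hat x)$ for a dual‑optimal $\{\lambda_k^\star\}$ --- has at most $2K+1$ elements, and (ii) among these candidate positions, an optimal hovering schedule assigns a positive duration to at most $K$ of them. As a preliminary remark, since (P2) obeys the time‑sharing condition, strong duality holds; matching the primal and dual values and using that the inner maximization defining the dual function decomposes over time instants, any primal‑optimal $\{\hat x^*(t)\}$ must satisfy $F(\hat x^*(t))=\max_{\hat x\in[x_{\rm I},x_{\rm F}]}F(\hat x)$ for almost every $t$, and conversely the primal reconstruction via the time‑allocation LP produces an optimal solution of (P2) supported on the maximizer set. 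Hence it suffices to establish (i) and (ii) for that set.

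For (i), finiteness is immediate because $F$ is a non‑constant rational function with strictly positive denominator $\prod_{k\in\mathcal K}((\hat x-w_k)^2+H^2)$, so $F(\hat x)=\max F$ is equivalent to a nontrivial polynomial equation. To obtain the sharp count I would count critical points of $F$: the numerator polynomial in \eqref{eq:dual_wot11} has degree at most $4K-3$, so $F'$ has at most $4K-3$ real zeros. Since $F>0$ on $\mathbb R$ and $F(\hat x)\to0$ as $\hat x\to\pm\infty$, one has $F'>0$ near $-\infty$ and $F'<0$ near $+\infty$, so along $\mathbb R$ the sign of $F'$ flips from $+$ to $-$ and back in alternation; consequently, if $F$ has $p$ local maxima it has exactly $p-1$ interlacing local minima, whence $2p-1\le4K-3$ and $p\le2K-1$. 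Every interior maximizer of $F$ on $[x_{\rm I},x_{\rm F}]$ is one of these local maxima, and at most the two endpoints $x_{\rm I},x_{\rm F}$ can additionally attain the maximum, so the maximizer set has at most $2K+1$ elements.

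For (ii), write the maximizer set as $\{\hat x_1^*,\dots,\hat x_N^*\}$ with $N\le2K+1$. The optimal hovering durations solve the linear program that maximizes $\hat E$ over $\hat\tau_1,\dots,\hat\tau_N\ge0$ and $\hat E$ subject to $\sum_{i=1}^N\hat\tau_iQ_k(\hat x_i^*)\ge\hat E$ for all $k\in\mathcal K$ and $\sum_{i=1}^N\hat\tau_i=T$. This LP has an optimal vertex of its feasible region (the region is nonempty --- restricting any feasible $\{\hat x(t)\}$ of (P2) to the maximizer set gives a feasible point --- and contains no line because $\hat\tau_i\ge0$); at a vertex, $N+1$ linearly independent constraints are active, among which the time equality is always one and at least one energy inequality is active (it fixes $\hat E=\min_k\sum_i\hat\tau_iQ_k(\hat x_i^*)$), so at most $K$ energy inequalities can be active and hence at least $N-K$ nonnegativity constraints are active. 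Thus at most $K$ of the optimal $\hat\tau_i^*$ are positive; deleting the zero‑duration positions yields an optimal multi‑location‑hovering solution of (P2) of the form \eqref{eq:opt_expression_P2} with at most $K$ hovering points, in particular $\hat N\le2K+1$, which proves the lemma.

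I expect the only genuine care to be needed in step (i): a dual‑optimal vector with some $\lambda_k^\star=0$ merely lowers the degree in \eqref{eq:dual_wot11}, which only tightens the bound; $F$ cannot be constant on any subinterval (it is a non‑constant rational function), so its critical points are isolated and the sign‑alternation argument is valid; and a degenerate zero of $F'$ is not a local maximum and so only decreases $p$. Step (ii) is routine linear‑programming bookkeeping once the feasible region is seen to be nonempty and pointed.
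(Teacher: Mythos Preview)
Your argument is correct. Part~(i) is exactly the paper's own proof --- the paper counts the zeros of the degree-$(4K-3)$ numerator of $F'(\hat x)$, notes that at most $2K-1$ of them are local maxima, and adds the two endpoints to reach $2K+1$ --- and your sign-alternation justification for ``at most half the extrema are maxima'' actually fills a gap the paper leaves implicit.

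Part~(ii), however, goes strictly beyond the paper. The lemma statement is internally inconsistent: the text says ``no more than $K$'' while the displayed inequality is $\hat N\le 2K+1$. The paper's proof establishes \emph{only} the $2K+1$ bound via the polynomial-degree count; it never argues that the time-allocation LP can be solved at a vertex with at most $K$ positive durations. Your basic-feasible-solution argument (the LP has $N+1$ variables, the time equality and at most $K$ energy inequalities among the active constraints at a vertex, hence at least $N-K$ nonnegativity constraints tight) is the standard and correct way to obtain the sharper $K$ bound, and it resolves the discrepancy in the statement that the paper's own proof does not address. In short: your route subsumes the paper's and additionally proves the stronger claim that the paper states but does not prove.
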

\begin{proof}
 It is observed that the optimal solution to (P2) has a multi-location-hovering structure, i.e., the solution to problem \eqref{eq:dual_wot1_} is a set of hovering points. 
  Note that   the left side of \eqref{eq:dual_wot11} is a  $4K-3$ order  polynomial of $\hat x$, i.e., $F(\hat x)$ has at most $4K-3$ extrema and therefore at most  $2K-1$ maximum points as potential hovering points. In addtioin,  the two boundary points $x_{\rm I}$ and $x_{\rm F}$  are also   potential hovering points.  
  Hence, there are a maximum number of $2K+1$ hovering locations in    the optimal solution to (P2).    
\end{proof}

    \vspace{-.3cm}
\subsection{Optimal Solution to Problem (P1.1)}
 \label{sec:Opt_design}

In Section~\ref{sec:Optimality_P2}, we have shown that for given $x_{\rm I}$ and $x_{\rm F}$, the   global optimal trajectory problem (P2)  can be obtained.   Suppose that  the corresponding optimal solution to problem (P2)   $\hat x(t) $ with optimal hovering points  $\hat x_1^*$, $\hat x_2^*, ..., \hat x_N^*$   and   the corresponding optimal hovering durations.  $\hat\tau _1^*$, $\hat\tau _2^*, ..., \hat\tau _N^*$.
According to Lemma~1, we can express the optimal solution to (P1.1) by combining  $\hat x(t)$ with $\bar x(t) = x_{\rm I} + Vt , \forall t\in(0,\bar T]$,  i.e., letting the UAV fly at the maximal speed from $x_{\rm I}$ to $x_{\rm F}$ while stopping/hovering   at the $N$  hovering points (in between) with the corresponding optimal hovering durations. 
 By defining $x^*_0 = x_{\rm I}^*$, $x^*_{N+1} = x_{\rm F}^*$ and $\hat\tau _0^*= \hat\tau _{N+1}^*=0$, we have  the optimal   solution $\{x^*(t)\}$  to problem (P1.1), where   $x^*(t)$ is given in~\eqref{x_nex_expression}. 

    \vspace{-.1cm}
\subsection{Optimal Solution to (OP) or  (P1)}      \vspace{-.05cm}
 \label{sec:Opt_design}
In Section~\ref{sec:Opt_design}, we have shown that  the global optimal trajectory to  problem (P1.1) is obtained, i.e.,  we have optimally solved  problem
(P1)  under given $x_{\rm I}$ and $x_{\rm F}$. 
Hence, by applying a 2D exhaustive search over the possible pair of  $x_{\rm I}$ and $x_{\rm F}$ together with solving (P1.1) under each $x_{\rm I}$ and $x_{\rm F}$, 
the global optimally trajectory solution to problem (P1)  is finally obtained.
  It is clear there is no benefit if the UAV hovers at a position out of the region of  ground nodes. Hence,  the feasible set of $x_{\rm I}$ is $[w_1,w_K]$ while the corresponding feasible set of $x_{\rm F}$ is $[x_{\rm I},w_K]$. 
To apply the  exhaustive search on  $x_{\rm I}$ or $x_{\rm F}$ within its  continuous feasible set, we introduce $d_{\min}$ as the resolution in distance.
Note that  we have $d_{\min}$ should be small and let $\frac{w_K-w_1}{d_{\min}}$ be an integer. 
Hence, the feasible locations of $x_{\rm I}$ and the corresponding $x_{\rm F}$ become $\{w_1, w_1\!+ d_{\min},w_1\!+\!2 d_{\min}, \cdots, w_K \}$ and $\{x_{\rm I}, x_{\rm I}\!+\! d_{\min},x_{\rm I}\!+\!2 d_{\min}, \cdots, w_K \}$, respectively.   


The detail description of the optimal solution to (P1) is provided in Algorithm~1.   
   \vspace{-.11in}
\begin{algorithm}[!h]
\algsetup{linenosize=\large}
\vspace{.05in}
\caption{\bf{for Optimally Solving Problem (P1)}}
\begin{algorithmic}
  \STATE $\!\!\!\!\!r_1=0$, $r_2=0$.
 \STATE$\!\!\!\!\!\!\!$ \noindent{\bf{for $\!x_{\rm I} = w_1 :  d_{\min} : w_K$}} \\
 \STATE $\!\!\!\!\!\!\!$\noindent{~~~~~$r_1=r_1+1$}; ~$x^{(r_1)}_{\rm I}=w_1+ r_1 \cdot d_{\min}$.   \\
 \STATE$\!\!\!\!\!\!\!$ \noindent{~~~~\bf{for $\!x_{\rm F} = x^{(r_1)}_{\rm I} :  d_{\min} : w_K$}} \\
  \STATE $\!\!\!\!\!\!\!$\noindent{~~~~~~~~~~$\!\! r_2\!=\! r_2 \!+\!1$};~$x^{(r_2)}_{\rm F}= x_{\rm I}^{(r_1)} \!+\! r_2 \!\cdot\! d_{\min}$. Hence, we have a \\    \STATE~~~~~~~~~~~~~~~~~~~~~~~~~~~~~~problem (P1.1) with   $(x^{(r_1)}_{\rm I},x^{(r_2)}_{\rm F})$. \\
   \STATE$\!\!\!\!\!\!\!$ \noindent{\bf{~~~~~~~~$\!\!$Optimally Solving Problem (P1.1)}} \\
 \STATE $\!\!\!\!\!\!\!$\noindent~~~~~~~~~{\bf{a)}}~According to Lemma~1, obtain the corresponding     \\
\STATE~~~~~~~~~$\!\{\bar x^{(r_1,r_2)}(t)\!\}\!$ and problem (P2). 
 \STATE$\!\!\!\!\!\!\!$ \noindent~~~~~~~~{\bf{b)}}~Solve       (P2) according to Section~\ref{sec:Optimality_P2} and have  the \\
\STATE~~~~~~~~~optimal solution $\!\{\hat x^{(r_1,r_2)}(t)\}$. \\
 \STATE $\!\!\!\!\!\!\!$\noindent~~~~~~~~~{\bf{c)}}~Combining  $\{\bar x^{(r_1,r_2)}(t)\}$ with  $\{\hat x^{(r_1,r_2)}(t)\}$, we \\
\STATE~~~~~~~~~have  $\{x^{(r_1,r_2)}(t)\}$, which is the optimal  solution to 
\STATE~~~~~~~~~the problem (P1.1) with  $(x^{(r_1)}_{\rm I},x^{(r_2)}_{\rm F})$.
 \STATE $\!\!\!\!\!\!\!$\noindent{~~~~\bf{end}} \\
   \STATE $\!\!\!\!\!\!\!$\noindent{~~\bf{$\!\!\!\!$end}} \\
   \vspace{-.0051in}
   \STATE$\!\!\!\!\!\!\!$ \noindent{The optimal solution to problem (P1) is  $\{x^{*}(t)\} = \arg \max\limits_{r_1,r_2}    \Big\{  \min\limits_{k\in\mathcal K}  E_k (\{x^{(r_1,r_2)}(t)\})   \Big\}  $}. \\
\end{algorithmic}
\label{algorithm1}
\end{algorithm}


     \vspace{-.35cm}
\subsection{Structure of the Optimal Trajectory   to Problem (P1)}     \vspace{-.05cm}

We describe the      structure  of  the  optimal trajectory solution to problem  (P1) in the following proposition.      
\begin{proposition}
\label{th:structure1}
The optimal trajectory solution to problem (P1.1) or problem (P1) follows the   SHF   structure, i.e., there exists a number of $N$ hovering locations at the optimal trajectory, such that the UAV always flies at the maximum speed from one hovering location to another, and then hovers at that location for a certain time duration.  It then holds that
$N \leq 2K+1$.
\end{proposition}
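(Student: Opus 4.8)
The plan is to assemble Proposition~\ref{th:structure1} directly from the machinery already built in Sections~III-A through III-D, since essentially all the work has been done in the lemmas. First I would note that by Lemma~\ref{le:Lemma1}, any unidirectional speed-constrained trajectory for (P1.1) under a fixed pair $(x_{\rm I},x_{\rm F})$ is equivalent, in the sense of achieving identical received energy at every node, to the concatenation of the fixed max-speed flight $\{\bar x(t)\}$ from $x_{\rm I}$ to $x_{\rm F}$ and a speed-free trajectory $\{\hat x(t)\}$ of duration $\hat T$; this is exactly the reformulation of (P1.1) as (P2). Then I would invoke the analysis in Section~\ref{sec:Optimality_P2}: because (P2) satisfies the time-sharing condition of \cite{Yu_2006}, strong duality holds, the dual function is obtained by pointwise maximization of $F(\hat x)=\sum_k\lambda_k Q_k(\hat x)$ over $[x_{\rm I},x_{\rm F}]$, and the primal optimum is recovered by the time-sharing LP \eqref{eq:dual_wot1_}. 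Consequently the optimal $\{\hat x^*(t)\}$ is a finite set of hovering points with optimized hovering durations, as recorded in \eqref{eq:opt_expression_P2}. Combining this with $\{\bar x(t)\}$ via Lemma~\ref{le:Lemma1} yields the explicit SHF trajectory \eqref{x_nex_expression}: the UAV flies at speed $V$ and pauses at each hovering point. This establishes the SHF structure for (P1.1); since the optimal solution to (P1) is obtained by the exhaustive search over $(x_{\rm I},x_{\rm F})$ in Algorithm~\ref{algorithm1}, and each candidate is SHF, the optimizer of (P1) is SHF as well.

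Next I would establish the bound $N\le 2K+1$ on the number of hovering locations. Here I would reuse the counting argument from the proof of Lemma~\ref{th:infinity-speed}: the stationarity condition for $F(\hat x)$ reduces, after clearing denominators, to the polynomial equation \eqref{eq:dual_wot11}, whose left-hand side has degree $4K-3$ in $\hat x$; hence $F$ has at most $4K-3$ critical points and therefore at most $2K-1$ interior local maxima. Adding the two boundary points $x_{\rm I}$ and $x_{\rm F}$ of the feasible interval $[x_{\rm I},x_{\rm F}]$ gives at most $(2K-1)+2 = 2K+1$ candidate hovering locations, so the maximizer set over which the time-sharing LP allocates time has cardinality $N\le 2K+1$. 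Because this bound holds for every fixed $(x_{\rm I},x_{\rm F})$ and in particular for the one selected by Algorithm~\ref{algorithm1}, it holds for the optimal trajectory of (P1).

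The only subtlety I would be careful about is why an \emph{optimal} trajectory of (P1.1) --- not merely some feasible one --- can be taken in exactly this finite hover-and-fly form with $N\le 2K+1$ distinct hovering points. The reason is that the dual-optimal $\{\lambda_k^*\}$ fixes a single function $F(\hat x)=\sum_k\lambda_k^* Q_k(\hat x)$ whose global maximizers over $[x_{\rm I},x_{\rm F}]$ are precisely the admissible hovering points (any point that is not a global maximizer of $F$ contributes strictly less than the dual value and hence cannot be used with positive duration in an optimal primal solution), and the set of these global maximizers is a subset of the $\le 2K+1$ points identified above. Strong duality plus complementary slackness then guarantees that concentrating the duration $T$ on these points --- as the LP \eqref{eq:dual_wot1_} does --- attains the primal optimum. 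I expect this verification that the $N$ hovering points are exactly the $F$-maximizers (and thus inherit the polynomial-degree bound) to be the main point requiring care; the SHF geometry itself then follows immediately from Lemma~\ref{le:Lemma1} and the explicit construction \eqref{x_nex_expression}, and the degree count is a routine consequence of the product-rule expansion already carried out in \eqref{eq:dual_wot11}.
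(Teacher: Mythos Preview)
Your proposal is correct and follows essentially the same approach as the paper's own proof, which simply states that combining Lemmas~\ref{le:Lemma1} and~\ref{th:infinity-speed} verifies the proposition for (P1.1), and then invokes the 2D exhaustive search to carry the SHF structure over to (P1). Your write-up is considerably more detailed---in particular the careful justification that the optimal hovering points must be global maximizers of $F$ and hence inherit the $2K+1$ bound---but the underlying logic (Lemma~1 decomposition, Lemma~2 counting, then exhaustive search) is identical.
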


\begin{proof}
Combining Lemmas 1 and 2, this proposition is verified directly for problem (P1.1).
Note that the global optimal trajectory to problem (P1)  is obtained  by applying a 2D exhaustive search over the possible pair of  $x_{\rm I}$ and $x_{\rm F}$, i.e., the optimal solution to problem (P1) is the best one in the solutions of all the  problems (P1.1) with different  $x_{\rm I}$ and $x_{\rm F}$.  Hence, as the optimal solution to any problem (P1.1) has a SHF structure,  the global optimal trajectory to problem (P1)    also has such a structure, i.e., Proposition~1 holds also for problem (P1).
\end{proof}

\section{Numerical Results}
\label{sec:Evaluation}
In this section, we evaluate the   proposed optimal SHF algorithm,    in comparison to two reference algorithm, i.e., the heuristic SHF and the SCP with time quantization from~\cite{Xu_TCOM}.
To obtain the WPT performance, we randomly drop ground nodes to have $20$ different topologies, and then apply these three algorithms at each topology, and finally   average   the max-min received  power among all ground nodes over these  random realizations.
In the simulation, we have the following default setups of parameters: $\beta_0=-30~dB$, $P=40~dBm$, $K=5$, $H=5~m$, $T=20~s$, $V=1~m/s$ and $\mathbf{w}$ is a $K$-dimension vector with each element being a random number in interval $[0,D]$, where $D=20~m$.  In addition, we set the quantization size of distance for the exhaustive search in the proposed algorithms to $d_{\min}=0.01~m$ and set accordingly the time quantization size in the reference algorithm SCP with time quantization to $t_{\min}=d_{\min}/V$. 
\begin{figure}[!t]
\label{resultvaryingT_average}
\centering
\includegraphics[width=0.44\textwidth, trim=10 15 15 15]{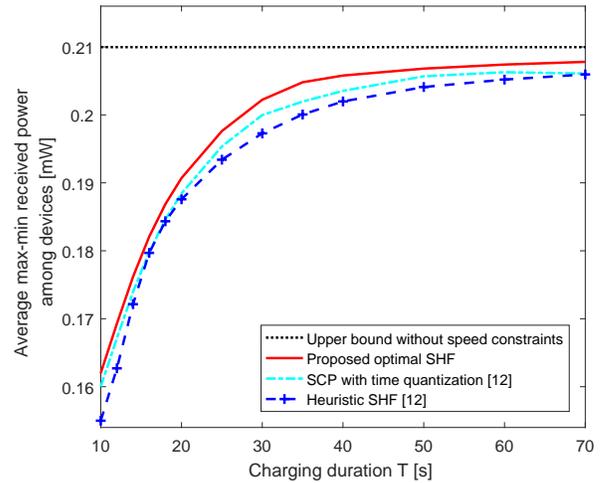}
\caption{Average performance comparison with a varying charging duration.}\label{comparison}
    \vspace{-.4cm}
\end{figure}
\begin{figure}[!t]
\centering
\includegraphics[width=0.45\textwidth, trim=2 15 17 0]{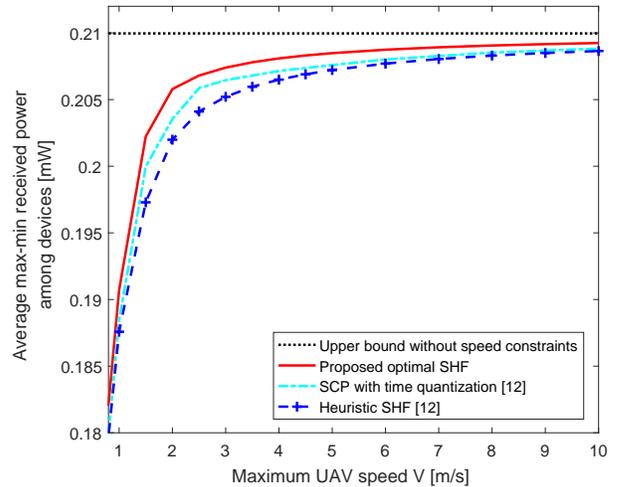}
\caption{Average performance comparison with a varying~speed constraint.}\label{comparison2}
    \vspace{-.40cm}
\end{figure}
 
 \begin{figure*}[!t]
\centering
\includegraphics[width=0.8\textwidth, trim=0 8 25 5]{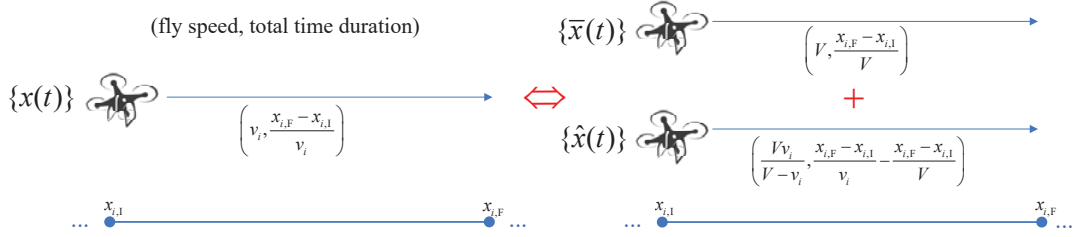}
\caption{Under Case 3, the WPT performance of $\{x(t)\} $ from $x_{i, \rm I}$ to $x_{i, \rm F}$   is  equivalent to   $\{\bar x(t)\} $ together with  $\{\hat x(t)\} $.}\label{comparison1}
 \label{Fig:forcase_3}
     \vspace{-.15cm}
\end{figure*}
\begin{table*}[!h]
\hrule
 \vspace{-0.1cm}
\begin{eqnarray}
\label{eqn_lemma1}
\!\!\!\!\!\!\!\!\!\!\!\!\!\!\!&\!\!\!& \underbrace{\int_0^{\frac{x_{i,\rm{F}}-x_{i,\rm{I}}}{v_i}}\frac{\beta_0 Pdt}{(x_{i,\rm{I}}+v_it-w_k)^2+H^2}}_{\{  x(t)\} } \nonumber \\
\!\!\!\!\!\!\!\!\!\!\!\!\!\!\!&\!\!\!=& \frac{v_i}{V}\int_0^{\frac{x_{i,\rm{F}}-x_{i,\rm{I}}}{v_i}}\frac{ \beta_0 Pdt}{(x_{i,\rm{I}}+v_it-w_k)^2+H^2}+\frac{V-v_i}{V}\int_0^{\frac{x_{i,\rm{F}}-x_{i,\rm{I}}}{v_i}}\frac{\beta_0 Pdt}{(x_{i,\rm{I}}+v_it-w_k)^2+H^2}    \\
\!\!\!\!\!\!\!\!\!\!\!\!\!\!\!&\!\!\!\buildrel {\buildrel v_it=Vt_1  \over {v_it=\frac{Vv_i}{V-v_i} t_2}} \over {=}& \underbrace{\int_0^{\frac{x_{i,\rm{F}}-x_{i,\rm{I}}}{V}}\frac{\beta_0 Pdt_1}{(x_{i,\rm{I}}+Vt_1-w_k)^2+H^2}}_{\{\bar x(t)\} }+\underbrace{\int_0^{\frac{x_{i,\rm{F}}-x_{i,\rm{I}}}{v_i}-\frac{x_{i,\rm{F}}-x_{i,\rm{I}}}{V}}\frac{\beta_0 Pdt_2}{(x_{i,\rm{I}}+\frac{Vv_i}{V-v_i} t_2-w_k)^2+H^2}}_{ {\{\hat x(t)\} }} . \nonumber
\end{eqnarray}
    \vspace{-.1cm}
\hrule
     \vspace{-.25cm}
\end{table*}
We  first study the impact of the charging duration $T$ on the max-min received power (among all ground nodes). The results are shown in Fig.~\ref{comparison}, where the upper bound of the ideal case with UAV speed constraint ignored is also provided.
First, it  is   observed       that, as the charging duration $T$ increases, the performance of all the three algorithms   increases towards the upper bound.
In addition, owning to applying additional SCP process, the SCP with time quantization has a better performance than the heuristic SHF, which is consistent with the results in~\cite{Xu_TCOM}.
More importantly,   the proposed     optimal SHF outperforms the two reference algorithms, in the whole charging duration regime. 

The relationship between the max-min received power (among all ground nodes) and UAV speed $V$ is investigated in Fig.~\ref{comparison2}.
From the figure, we learn that under all algorithms, the max-min received power   increases as the  UAV speed~$V$ becomes large.
In addition, as the speed significantly increases, all the three designs are observed to approach   the upper bound. 
Moreover, we observe again the performance advantage of the proposed optimal   algorithm  in comparison to the two reference algorithms.

\section{Conclusion}
\label{sec:Conclusion}

In this paper, we focus on a UAV-enabled multiuser WPT network with a linear topology.  We   studied the   1D UAV trajectory design  problem with the objective of maximizing    the minimal received   energy among  all ground nodes, subject to the maximum UAV speed constraints.
Different from previous works that only provided heuristic and locally optimal solutions, for the first time, we   presented  the globally  optimal 1D UAV trajectory solution  to the considered WPT problem, by equivalently decomposing any speed-constrained 1D UAV trajectory into a maximum-speed trajectory and a speed-free trajectory, together with the Lagrange dual method. In addition,   we have characterized the structure of  optimal trajectory solutions to the   WPT problem, i.e., an optimal trajectory can be described by a finite number of hovering points and the corresponding  hovering durations, while the UAV  always flies with the maximal speed among these hovering points. Moreover, we have derived the upper-bound for the  number of these hovering points.

The proposed optimal algorithm is based on exhaustive search, i.e.,  it leads to significant complexity.
 Future work will follow the optimal structure of the trajectory provided in this work to propose efficient trajectory designs and will extend the study to a 2D/3D system topology. 

\appendices
\section{Proof of Lemma~1}
This lemma can be proved by partitioning the whole time duration $T$  into a sufficiently large number  of time portions, each with a sufficiently small length such that during the portion the UAV speed is constant.  Denote the length of  $i$-th portion by $\tau_i, i =1,\cdots,I$ and we have $\sum\nolimits_{i=1,\cdots,I} \tau_i = T$. In addition, denote by $v_i$ the speed of  the UAV at  the $i$-th portion, i.e., $0 \le v_i \le V$. Hence, there are three cases at each portion: Case 1.    the UAV  hovers  at a given location, i.e.,  $  v_i =0$;  Case 2. the UAV    flies from $x_{i,{\rm{I}}}$ to  $x_{i,{\rm{F}}}$ with speed $ v_i =V$;  Case 3.  the UAV    flies from $x_{i,{\rm{I}}}$ to  $x_{i,{\rm{F}}}$ with speed $ 0<v_i < V$.

In the following,  we prove Lemma~\ref{le:Lemma1} by showing that   within each time portion the UAV trajectory satisfying the maximum speed constraint  is  equivalent to two trajectories as defined in the lemma.
The  $i$-th portion  of $\{x(t)\}$, the corresponding parts in $\{\bar x(t)\} $ and $\{\hat x(t)\}$   can be developed in the following way:
\begin{itemize}
\item Case 1: When  the UAV is hovering in the portion, just let the $\{\hat x(t)\}$   have the same hovering point and the same hovering time $\tau_i$. 
\item Case 2:  When the UAV flies from $x_{i,{\rm{I}}}$ to  $x_{i,{\rm{F}}}$  with the maximal speed, i.e., $ v_i = V$, just let trajectory  $\{\bar x(t)\} $  have the same trajectory as $\{  x(t)\} $  in this portion.
Hence, in Case 2   trajectory  $\{\hat x(t)\} $  not covers the interval between $x_{i,{\rm{I}}}$ and  $x_{i,{\rm{F}}}$. Hence, the   trajectory (in terms of not time but topology) of  $\{\hat x(t)\} $ is not continuous, i.e., there is not speed limit of the UAV in   $\{\hat x(t)\} $.
\item Case 3:  In this case,  the UAV flies   from $x_{i,{\rm{I}}}$ to  $x_{i,{\rm{F}}}$ with a speed lower than the maximal speed, i.e., $ 0<v_i < V$.
 The length of the portion is $\tau_i = \frac{{x_{i,{\rm{F}}}} - {x_{i,{\rm{I}}}}}{v_i}$.
 As shown in Fig.~\ref{Fig:forcase_3}, we can let the UAV fly with the maximal speed in $\{\bar x(t)\}$ which has  the time cost $\frac{{x_{i,{\rm{F}}}} - {x_{i,{\rm{I}}}}}{V}$. In addition, we let the UAV in $\{\hat x(t)\} $ use the remaining time, i.e., ${\frac{{{x_{i,{\rm{F}}}} - {x_{i,{\rm{I}}}}}}{{{v_i}}} - \frac{{{x_{i,{\rm{F}}}} - {x_{i,{\rm{I}}}}}}{V}}$, to fly from  $x_{i,{\rm{I}}}$ to  $x_{i,{\rm{F}}}$, while the corresponding speed can be calculated as $\frac{{{x_{i,{\rm{F}}}} - {x_{i,{\rm{I}}}}}}{{\frac{{{x_{i,{\rm{F}}}} - {x_{i,{\rm{I}}}}}}{{{v_i}}} - \frac{{{x_{i,{\rm{F}}}} - {x_{i,{\rm{I}}}}}}{V}}}= \frac{V v_i} {V-v_i}$.  Note that when $v_i$ becomes   significantly close to $V$ and therefore the corresponding UAV speed in $\{\hat x(t)\} $ is possible to be sufficient large, which confirms again no speed limit for the UAV in  $\{\hat x(t)\}$. As validated in~\eqref{eqn_lemma1}, in $i$-th portion the WPT performance of  $\{x(t)\} $ and the sum WPT performance of   $\{\bar x(t)\} $ and  $\{\hat x(t)\} $  are the same $\forall k = 1,\cdots,K$. 

\end{itemize}
So far, we have shown that  for each  portion  of $\{x(t)\}$, we can obtain the corresponding parts of $\{\bar x(t)\} $ and $\{\hat x(t)\}$  having the same WPT performance as the portion of $\{x(t)\}$. By repeating the above process for every portion  of $\{x(t)\}$,  $\{\bar x(t)\} $ and $\{\hat x(t)\}$  can be developed while satisfying Lemma~1.

 
\bibliographystyle{IEEEtran}

\bibliographystyle{IEEEtran}


\end{document}